	\newtheorem{theorem}{Theorem}
\newcommand{\gv}[1]{\ensuremath{\text{\boldmath$ #1 $}}}
\newcommand{\abs}[1]{\left| #1 \right|} 
\let\baraccent=\= 
\renewcommand{\=}[1]{\stackrel{#1}{=}} 
\newcommand{\aref}[1]{\hyperref[#1]{Appendix~\ref{#1}}}
\begin{document}

\title{Quantum Fourier Transform in Computational Basis}
\author{S. S. Zhou }
\affiliation{{School of Physics, The University of Western Australia, Crawley WA 6009, Australia}}
\affiliation{{Kuang Yaming Honors School, Nanjing University, Nanjing, 210093, China}}
\email{jingbo.wang@uwa.edu.au}
\author{T. Loke }
\author{J. A. Izaac }
\author{J. B. Wang }
\affiliation{{School of Physics, The University of Western Australia, Crawley WA 6009, Australia}}
\keywords{}

\begin{abstract}
The conventional Quantum Fourier Transform, with exponential speedup compared to the classical Fast Fourier Transform,
has played an important role in quantum computation as a vital part of many quantum algorithms (most prominently, the Shor's factoring algorithm).
However, situations arise where it is not sufficient to encode the Fourier coefficients within the quantum amplitudes,
for example in the implementation of control operations that depend on Fourier coefficients.
In this paper, we detail a new quantum algorithm to encode the Fourier coefficients in the computational basis,
with fidelity $1 - \delta$ and desired precision $\epsilon$. Its time complexity depends polynomially on $\log(N)$, where $N$ is the problem size, and linearly on $1/\delta$ and $1/\epsilon$.
We also discuss an application of potential practical importance, namely the  simulation of circulant Hamiltonians.
\end{abstract}


\maketitle

\section{Introduction}
\label{sec:introduction}

Since the milestone introduction of Shor's quantum factoring algorithm \cite{shor1997} allowing prime number factorization
with complexity $\mathcal{O}(\text{polylog}N)$ -- an exponential speedup compared to the fastest-known classical algorithms
 -- there has been an increasing number of quantum algorithm discoveries harnessing the unique properties of quantum mechanics
 in order to achieve significant increases in computational efficiency. The use of the Quantum Fourier Transform (QFT) \cite{deutsch1985}
 in Shor's factoring algorithm is integral to the resulting speedup.

The Fast Fourier Transform (FFT), an efficient classical implementation of the discrete Fourier transform (DFT), is a hugely
important algorithm, with classical uses including signal processing and frequency analysis \cite{bergland1969}. Due to its widespread ubiquity
and efficiency (with scaling $\mathcal{O}(N \log N)$), it has been regarded to be the one of the most
important non-trivial classical algorithms  \cite{cleve2000}.

The QFT (with complexity $\mathcal{O}((\log N)^2)$) algorithm is the natural extension of the DFT to the quantum regime,
with exponential speedup realized compared to the FFT ($\mathcal{O}(N\log N)$), due to superposition and quantum parallelism.
The QFT is essentially identical to the FFT
in that it performs a DFT on a list of complex numbers, but the result of the QFT is stored as amplitudes of a quantum state
vector. In order to extract the individual Fourier components, measurements need to be performed on the quantum state vector.
As such, the QFT is not directly useful for determining the Fourier transformed amplitudes of the original list of numbers.
{However, the QFT is widely used as a subroutine in larger algorithms, including but not limited to Shor's algorithm \cite{shor1997},
quantum amplitude estimation \cite{kitaev1995} and quantum counting \cite{brassard1998,brassard2002}}.

Typically, there are two methods of encoding the result of a quantum algorithm: encoding within the computational basis
of the quantum state \cite{kitaev1995}, or encoding within the amplitudes of the quantum state \cite{deutsch1985}.
The QFT fits the latter criteria, and has been enormously successful, used as a foundation for a plethora of other
quantum algorithms -- for example in the fields of quantum chemistry and simulations \cite{benenti2008,kassal2008,szkopek2005},
signal and image processing \cite{hales2000,schuetzhold2003}, cryptography \cite{vandam2006} and
computer science \cite{cleve2000,jordan2005}.
However, situations arise where we need the Fourier coefficients in the computational basis, for example in order to
efficiently implement circulant Hamiltonians with quantum circuits~\cite{qiang2015}.

In this paper, we introduce a new quantum scheme for computing the Fourier transform and storing the results in the computational
basis, namely Quantum Fourier Transform in the Computational Basis (QFTC). We begin in \autoref{sec:defs} by defining the notations
and chosen conventions, before subsequently detailing the QFTC algorithm for computing the DFT in the computational basis in \autoref{sec:qftc}.
This section also includes a thorough analytic derivation of the complexity and error analysis.
One possible application of this algorithm, the implementation of circulant Hamiltonians,
is then discussed in \autoref{sec:applications}.
Finally, we present our conclusions in \autoref{sec:conclusion}.
In addition, we have provided supplementary material in the appendices,
detailing the quantum arithmetic necessary for the QFTC algorithm \aref{sec:arithmetic}
and the implementation of circulant matrix operators \aref{sec:circulantoperator}.

\section{Definitions and notations}
\label{sec:defs}

The DFT, applied to a unit vector $\gv{x}=(x_0~x_1~\cdots~x_{N-1})\in\mathbb{C}^N$, outputs a unit vector $\gv{y}=(y_0~y_1~\cdots~y_{N-1})$, where
\begin{equation}
y_k = \frac{1}{\sqrt{N}} \sum_{j=0}^{N-1} e^{2\pi i jk / N} x_j, ~~~k=0,1,\ldots,N-1.
\end{equation}
The QFT performs the Discrete Fourier Transform in amplitudes:
\begin{equation}
\sum_{j=0}^{N-1} x_j\ket{j} 
\rightarrow
\sum_{k=0}^{N-1} y_k\ket{k}.
\end{equation}
The QFTC, on the other hand, enables the Fourier transformed coefficients to be encoded in the computational basis:
\begin{equation}
\ket{0} 
\rightarrow
\frac{1}{\sqrt{N}} \sum_{k=0}^{N-1} \ket{k}\ket{y_k}
\end{equation}
where $\ket{y_k}$ corresponds to the fixed-point binary representation of number $y_k \in (-1,1)$ (see the complemental encoding in \aref{sec:arithmetic}).
In the QFTC algorithm, the value of $\gv{x}$ is provided by an oracle $O_x$
\begin{equation}
\label{eq:oracle}
O_x \ket{0} = \sum_j x_j \ket{j}.
\end{equation}
The oracle can be efficiently implemented if $\gv{x}$ is efficiently computable, or using the qRAM
which takes complexity $\log N$ under certain conditions \cite{giovannetti2008, grover2002,lloyd2013,kaye2004,soklakov2006}.
In fact, in the QFTC algorithm,
the controlled-$O_x$ gate ($\ket{0}\bra{0}\otimes{\mathbb{I}} + \ket{1}\bra{1}\otimes O_x$) is required.
(Otherwise, there will always be an indefinite total phase $e^{i\phi}$ of $\gv{x}$).
The number of calls to controlled-$O_x$ and its inverse will be included in the overall complexity of the QFTC algorithm.

Without loss of generality, we will assume the $y_k$ coefficients are real in the following sections. If this is not the case, we can always redefine the inputs as the following:
\begin{equation}
x'_j = \frac{x_j + x^*_{N-j}}{2} \quad(x_N = x_0 \text{~and the }x^*_j\text{ is the complex conjugate of }x_j.)
\end{equation}
for all $j$. Applying the DFT to $\gv{x'}$ then produces a purely real result, $y'_k = \text{Re} (y_k)$. The imaginary components $\text{Im}(y_k)$ can be derived analogously in the same fashion. In the following sections, we assume that $N = 2^L$, where $L$ is some integer, as in the conventional FFT and QFT algorithms.


\section{Quantum Fourier Transform in the Computational Basis}
\label{sec:qftc}

The steps involved in the QFTC algorithm are detailed below (with \autoref{fig:circuit1to4} depicting the circuit for \ref{item:prep_k}--\ref{item:phase}
and \autoref{fig:whole} for \ref{item:swap}--\ref{item:minus}).

\newcounter{saveenum}

\begin{enumerate}[label=\emph{Step~\arabic*},start=0]
\item Intialise all qubits, including ancillas, to $\ket{0}$.
\item Prepare the first register of $L$ qubits into an equal superposition of its computational basis states using a Hadamard transform:
\label{item:prep_k}
\begin{equation}
\ket{0^L} \xrightarrow{H^{\otimes L}} \frac{1}{\sqrt{N}}\sum_{k=0}^{N-1}\ket{k},
\end{equation}
where $k$ is represented in binary as $k_1k_2\cdots k_L$ with $L$ qubits.

\item Prepare an ancillary qubit in the third register as:
\label{item:prep_ancilla}
\begin{equation}
\ket{0} \xrightarrow{H}
\frac{1}{\sqrt{2}}(\ket{0}+\ket{1}).
\end{equation}

\item
Apply the controlled-$O_x$ and controlled-Hadamard gates to the second register of $L$ qubits, conditional on the ancillary qubit state:
\begin{equation}
\label{item:prep_j}
\ket{0^L}\frac{1}{\sqrt{2}}\big(\ket{1}+\ket{0}\big)
\xrightarrow{ O_x \otimes \ket{1}\bra{1} + H^{\otimes L} \otimes \ket{0}\bra{0}}
\frac{1}{\sqrt{2}}\big(\sum_{j=0}^{N-1}x_j\ket{j}\ket{1}+\frac{1}{\sqrt{N}}\sum_{j=0}^{N-1}\ket{j}\ket{0}\big),
\end{equation}
where $j$ is represented in binary as $j_1j_2\cdots j_L$ with $L$ digits.

\item
\label{item:phase}
Apply a controlled phase operator on these three registers (with details given in \autoref{fig:phase}):
\begin{multline}
\label{eq:def-phik}
\frac{1}{\sqrt{N}}\sum_{k,j=0}^{N-1}\ket{k}\frac{1}{\sqrt{2}}\big(x_j\ket{j}\ket{1}+\frac{1}{\sqrt{N}}\ket{j}\ket{0}\big)
\xrightarrow{\sum_{j,k} e^{2\pi ijk/N} \ket{k}\bra{k}\otimes\ket{j}\bra{j}\otimes\ket{1}\bra{1}} \\
\frac{1}{\sqrt{N}}\sum_{k,j=0}^{N-1}\ket{k}\frac{1}{\sqrt{2}}\big(x_je^{2\pi ijk/N}\ket{j}\ket{1} + \frac{1}{\sqrt{N}}\ket{j}\ket{0} \big)
\equiv \frac{1}{\sqrt{N}}\sum_{k=0}^{N-1}\ket{k}\ket{\phi_k}.
\end{multline}

\setcounter{saveenum}{\value{enumi}}

\end{enumerate}

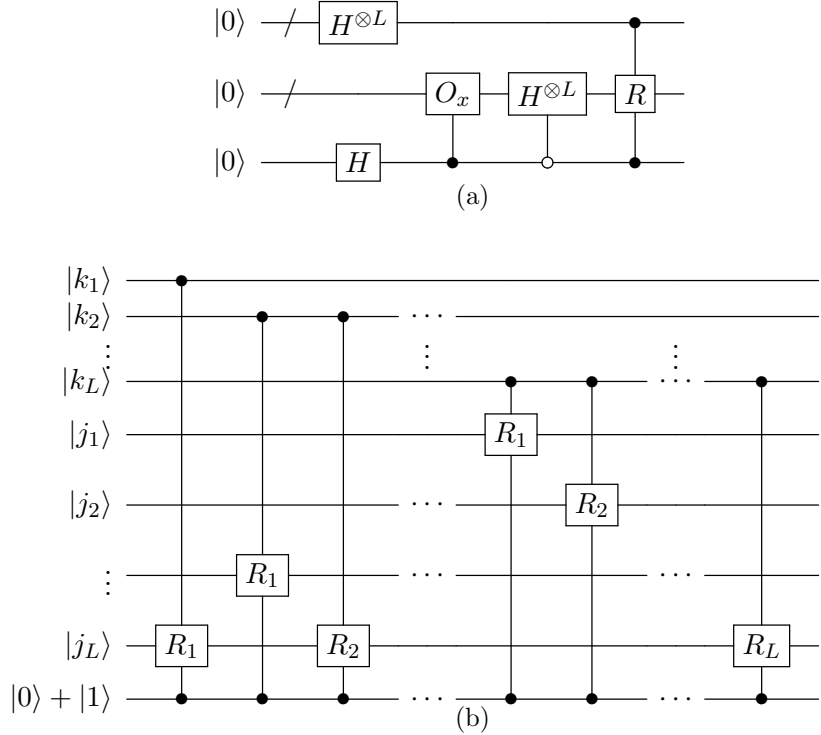
\begin{figure}[ht]
\subfigure[\label{fig:prep}]{$
\Qcircuit @C=1em @R=1em
{
\lstick{\ket{0}}	&	{/}\qw	&	\gate{H^{\otimes L}}	&	\qw	&	\qw	&	\ctrl{1}	&	\qw	\\
\lstick{\ket{0}}	&	{/}\qw	&	\qw	&	\gate{O_x}	&	\gate{H^{\otimes L}}	&	\gate{R}	&	\qw	\\
\lstick{\ket{0}}	&	\qw	&	\gate{H}	&	\ctrl{-1}	&	\ctrlo{-1}	&	\ctrl{-1}	&	\qw	
}
$}\\
\vspace{0.2in}
\subfigure[\label{fig:phase}]{$
\Qcircuit @C=1em @R=1em
{
\lstick{\ket{k_1}}	&	\ctrl{7}	&	\qw	&	\qw	&	\qw	&	\qw	&	\qw	&	\qw	&	\qw	&	\qw	&	\qw	&	\qw	&	\qw	&	\qw	\\
\lstick{\ket{k_2}}	&	\qw	&	\ctrl{5}	&	\ctrl{6}	&	\qw	&	\cdots	&		&	\qw	&	\qw	&	\qw	&	\qw	&	\qw	&	\qw	&	\qw	\\
\lstick{\vdots}	&		&		&		&		&	\vdots	&		&		&		&		&	\vdots	&		&		&		\\
\lstick{\ket{k_L}}	&	\qw	&	\qw	&	\qw	&	\qw	&	\qw	&	\qw	&	\ctrl{1}	&	\ctrl{2}	&	\qw	&	\cdots	&		&	\ctrl{4}	&	\qw	\\
\lstick{\ket{j_1}}	&	\qw	&	\qw	&	\qw	&	\qw	&	\qw	&	\qw	&	\gate{R_1}	&	\qw	&	\qw	&	\qw	&	\qw	&	\qw	&	\qw	\\
\lstick{\ket{j_2}}	&	\qw	&	\qw	&	\qw	&	\qw	&	\cdots	&		&	\qw	&	\gate{R_2}	&	\qw	&	\qw	&	\qw	&	\qw	&	\qw	\\
\lstick{\vdots}	&	\qw	&	\gate{R_1}	&	\qw	&	\qw	&	\cdots	&		&	\qw	&	\qw	&	\qw	&	\cdots	&		&	\qw	&	\qw	\\
\lstick{\ket{j_L}}	&	\gate{R_1}	&	\qw	&	\gate{R_2}	&	\qw	&	\qw	&	\qw	&	\qw	&	\qw	&	\qw	&	\qw	&	\qw	&	\gate{R_L}	&	\qw	\\
\lstick{\ket{0}+\ket{1}}	&	\ctrl{-1}	&	\ctrl{-2}	&	\ctrl{-1}	&	\qw	&	\cdots	&		&	\ctrl{-4}	&	\ctrl{-3}	&	\qw	&	\cdots	&		&	\ctrl{-1}	&	\qw	\\
}
$}
\caption{(a) the quantum circuit for \ref{item:prep_k}--\ref{item:phase}; (b) Detailed quantum gates to implement the controlled phase operator in \ref{item:phase}.
Here $R_\ell = \ket{0}\bra{0}+e^{2\pi i/2^{\ell}}\ket{1}\bra{1}$.}
\label{fig:circuit1to4}
\end{figure}

Using the Hadamard gate and the pauli-Z gate, we can prepare two additional registers in the quantum states $\ket{\phi^\pm}$:
\begin{align}
\ket{0^{L+1}}\xrightarrow{(+):~H^{\otimes L}\otimes{H};~(-):~H^{\otimes L}\otimes{ZH}}\ket{\phi^\pm} =
\frac{1}{\sqrt{2}}\big(\sum_{j=0}^{N-1}\frac{\pm 1}{\sqrt{N}}\ket{j}\ket{1} + \sum_{j=0}^{N-1}\frac{1}{\sqrt{N}}\ket{j}\ket{0} \big).
\end{align}
Note that $\ket{\phi_k} = \frac{1}{\sqrt{2}}\big(\sum_{j=0}^{N-1}x_j^{2\pi i j k/N}\ket{j}\ket{1} + \sum_{j=0}^{N-1}\frac{1}{\sqrt{N}}\ket{j}\ket{0} \big)
$ in \autoref{eq:def-phik},
\begin{equation}
\abs{\braket{\phi^\pm|\phi_{k}}}^2 =
\frac{1}{4} (y_k^2 + 1) \pm \frac{y_k}{2},
\end{equation}
and
\begin{equation}
\abs{\braket{\phi^+|\phi_{k}}}^2 - \abs{\braket{\phi^-|\phi_{k}}}^2 = y_k,
\end{equation}
which leads to the following steps (as detailed in \autoref{fig:whole}).

\begin{figure}[ht]
\[
\Qcircuit @C=1em @R=1em{
	&		&		&		&		&		&	\lstick{\ket{0}}	&	\multigate{1}{2\sin^2(\pi \cdot )-1}	&	\multigate{8}{\mathrm{\Sigma^-}}	&	\rstick{\ket{\abs{\braket{\phi^+|\phi_k}}^2}}\qw	\\
	&	\lstick{\ket{0}}	&	{/}\qw	&	\qw	&	\gate{H^{\otimes p_0}}	&	\ctrl{1}^{~~\ket{\ell}}	&	\gate{\text{QFT}^\dagger}	&	\ghost{2\sin^2(\pi \cdot )-1}\qw	&		&		\\
\lstick{\ket{0}_1}	&	\gate{H}	&	\ctrl{1}	&	\gate{H}	&	\qw	&	\multigate{2}{(Q^+_k)^\ell}	&	\qw	&		&		&		\\
\lstick{\ket{\phi_k}_2}	&	{/}\qw	&	\multigate{1}{\mathrm{SWAP}}	&	\qw	&	\qw	&	\ghost{(Q^+_k)^\ell}	&	\qw	&		&		&		\\
\lstick{\ket{\phi^+}_3}	&	{/}\qw	&	\ghost{\mathrm{SWAP}}	&	\qw	&	\qw	&	\ghost{(Q^+_k)^\ell}	&	\qw	&		&		&		\\
	&		&		&		&		&		&	\lstick{\ket{0}}	&	\multigate{1}{2\sin^2(\pi \cdot )-1}	&	\ghost{\mathrm{\Sigma^-}}	&	\rstick{\ket{\abs{\braket{\phi^-|\phi_k}}^2}}\qw	\\
	&	\lstick{\ket{0}}	&	{/}\qw	&	\qw	&	\gate{H^{\otimes p_0}}	&	\ctrl{1}^{~~\ket{\ell}}	&	\gate{\text{QFT}^\dagger}	&	\ghost{2\sin^2(\pi \cdot )-1}\qw	&		&		\\
\lstick{\ket{0}}	&	\gate{H}	&	\ctrl{1}	&	\gate{H}	&	\qw	&	\multigate{2}{(Q^-_k)^\ell}	&	\qw	&		&		&		\\
\lstick{\ket{\phi_k}}	&	{/}\qw	&	\multigate{1}{\mathrm{SWAP}}	&	\qw	&	\qw	&	\ghost{(Q^-_k)^\ell}	&	\qw	&	\lstick{\ket{0}}	&	\ghost{\mathrm{\Sigma^-}}	&	\rstick{\ket{y_k}}\qw	\\
\lstick{\ket{\phi^-}}	&	{/}\qw	&	\ghost{\mathrm{SWAP}}	&	\qw	&	\qw	&	\ghost{(Q^-_k)^\ell}	&	\qw\gategroup{2}{5}{10}{7}{1.2em}{--}	&		&		&		\\
	&		&		&		&		&	~~~~~\mbox{amplitude estimation}	&		&		&		&		\\	
}
\]
\caption{the quantum circuit for \ref{item:swap}--\ref{item:minus}.
The $\boxed{\Sigma^-}$ gate transforms $\ket{\alpha}\ket{\beta}\ket{0}$ into $\ket{\alpha}\ket{\beta}\ket{\alpha-\beta}$ (see \aref{sec:arithmetic}).
}
\label{fig:whole}
\end{figure}

\begin{enumerate}[label=\emph{Step~\arabic*}]

\setcounter{enumi}{\value{saveenum}}

\item
\label{item:swap}
Prepare $\ket{\phi^+}$ and perform the swap test \cite{buhrman2001} on $\ket{\phi_{k}}$ and $\ket{\phi^+}$. We get
\begin{equation}
\label{eq:innerproduct}
\frac{1}{2}\ket{0}\big(\ket{\phi_k}\ket{\phi^+}+\ket{\phi^+}\ket{\phi_k}\big)
+ \frac{1}{2}\ket{1}\big(\ket{\phi_k}\ket{\phi^+}-\ket{\phi^+}\ket{\phi_k}\big) \equiv \ket{\psi^+_k}
\end{equation}
for all values of $k$.

\item  Run {amplitude estimation} of $Q^+_k$, for all $k$, on state $\ket{\psi^+_k}$ as defined below:
\label{item:phase_estimation1}
\begin{equation}
\ket{\psi^+_k}
\rightarrow
\Ket{\frac{\theta_k}{\pi} }\Ket{\psi^{\uparrow}_k}  +
\Ket{1 - \frac{\theta_k}{\pi}}\Ket{\psi^{\downarrow}_k}.
\end{equation}

\item  Compute $\abs{\braket{\phi^+|\phi_{k}}}^2 = (y_k^2 + 1)/4 + y_k/2$ using the quantum multiply-adder and sine gate (see \aref{sec:arithmetic} for details),
for all values of $k$:
\label{item:phase_estimation2}
\begin{equation}
\Ket{\frac{\theta_k}{\pi}} \Ket{\psi^{\uparrow}_k}  +
\Ket{1 - \frac{\theta_k}{\pi}} \Ket{\psi^{\downarrow}_k}
\rightarrow
\Ket{\abs{\braket{\phi^+|\phi_{k}}}^2} \ket{\psi^+_k} .
\end{equation}

\setcounter{saveenum}{\value{enumi}}

\end{enumerate}

In the above description of \ref{item:swap}--\ref{item:minus},
\begin{equation}
\ket{\psi^+_k} = \sin{\theta_k} \ket{\psi^{0}_k} + \cos{\theta_k} \ket{\psi^{1}_k}
\end{equation}
where $\ket{\psi^{0}_k}$ corresponds to the part of $\ket{\psi_k^+}$ whose first qubit is $\ket{0}$, $\ket{\psi^{1}_k}$ corresponds to the part of $\ket{\psi_k^+}$
whose first qubit is $\ket{1}$.
It can be easily calculated from \autoref{eq:innerproduct} that $\sin^2\theta_k = (1 + \abs{\braket{\phi^+|\phi_{k}}}^2)/2 $ where $\theta_k \in [0,\pi/2]$.
We define $Q^+_k := -\mathcal{A}^+_k S_0 (\mathcal{A}^+_k)^{\dagger} S_\chi$,
where $\mathcal{A}^+_k$ is the unitary operator performing $\ket{0}_{123} \xrightarrow{\mathcal{A}^+_k} \ket{\psi^+_k}$, $S_0 = \mathbb{I} - 2\ket{0}_{123}\bra{0}_{123}$ and $S_\chi = \mathbb{I} - 2\ket{0}_{1}\bra{0}_{1}$
(subscripts denote labels of registers shown in \autoref{fig:whole}).
According to the amplitude estimation algorithm \cite{brassard2002},
\begin{equation}
(Q^+_k)^{\ell}\ket{\psi^+_k}
= \sin(2\ell+1)\theta_k \Ket{\psi^{0}_k} + \cos(2\ell+1)\theta_k \Ket{\psi^{1}_k}.
\end{equation}
For any $\ell \in \mathbb{N}$,
$Q^+_k$ acts as a rotation in $2$-dimensional space $\mathrm{Span}\{\Ket{\psi^{0}_k},\Ket{\psi^{1}_k}\}$,
and it has eigenvalues $e^{\pm i 2\theta_k}$ with eigenstates $\ket{\psi^{\uparrow,\downarrow}_k}$ (un-normalized).
Therefore we can generate the state
\begin{equation}
\ket{\psi^+_k} = \ket{\psi^{\uparrow}_k} + \ket{\psi^{\downarrow}_k} \xrightarrow{\text{phase estimation}} \Ket{\frac{\theta_k}{\pi}}\Ket{\psi^{\uparrow}_k}
+\Ket{1 - \frac{\theta_k}{\pi}} \Ket{\psi^{\downarrow}_k},
\end{equation}
by running amplitude estimation of $\mathcal{A}^+_k$ on $\ket{\psi^+_k}$, and get $\Ket{\abs{\braket{\phi^+|\phi_{k}}}^2} = \ket{2\sin^2\theta_k-1}$
using the quantum multiply-adder and sine gate (see \aref{sec:arithmetic}).
The quantum circuit of amplitude estimation procedure is marked out in \autoref{fig:whole}.

\begin{enumerate}[label=\emph{Step~\arabic*}]

\setcounter{enumi}{\value{saveenum}}

\item
\label{item:pre-minus}
Repeat \ref{item:prep_ancilla}--\ref{item:phase_estimation2} in other registers, with $\ket{\phi^+}$ and $\mathcal{A}^+_k$
replaced by $\ket{\phi^-}$ and $\mathcal{A}^-_k$, we get
\begin{equation}
\frac{1}{\sqrt{N}}\sum_{k=0}^{N-1}\ket{k}\Ket{\abs{\braket{\phi^+|\phi_{k}}}^2}\ket{\psi^+_k}
\rightarrow \frac{1}{\sqrt{N}}\sum_{k=0}^{N-1}\ket{k}\Ket{\abs{\braket{\phi^+|\phi_{k}}}^2}\ket{\psi^+_k}\Ket{\abs{\braket{\phi^-|\phi_{k}}}^2}\ket{\psi^-_k}.
\end{equation}

\item
\label{item:minus}
Calculate $\abs{\braket{\phi^+|\phi_{k}}}^2$ minus $\abs{\braket{\phi^-|\phi_{k}}}^2$ and encode the result in $p_0+1$ qubits (error $\epsilon = 2^{-p_0}$), using the quantum adder described in \aref{sec:arithmetic}:
\begin{multline}
\frac{1}{\sqrt{N}}\sum_{k=0}^{N-1}\ket{k}\Ket{\abs{\braket{\phi^+|\phi_{k}}}^2}\ket{\psi^+_k}\Ket{\abs{\braket{\phi^-|\phi_{k}}}^2}\ket{\psi^-_k}\ket{0^{p_0+1}}
\rightarrow\\
\frac{1}{\sqrt{N}}\sum_{k=0}^{N-1}\ket{k}\Ket{\abs{\braket{\phi^+|\phi_{k}}}^2}\ket{\psi^+_k}\Ket{\abs{\braket{\phi^-|\phi_{k}}}^2}\ket{\psi^-_k}\ket{y_k}
\equiv
\frac{1}{\sqrt{N}} \sum_{k=0}^{N-1} \ket{k}\ket{\Psi^{\mathrm{ancilla}}_k}\ket{y_k}.
\end{multline}

\item
\label{item:uncompute}
Uncompute the ancillas using the inverse algorithm of \ref{item:prep_ancilla}--\ref{item:pre-minus}:
\begin{equation}
\frac{1}{\sqrt{N}} \sum_{k=0}^{N-1} \ket{k}\ket{\Psi^{\mathrm{ancilla}}_k}\ket{y_k}
\rightarrow
\frac{1}{\sqrt{N}} \sum_{k=0}^{N-1} \ket{k}\ket{y_k}.
\end{equation}
\setcounter{saveenum}{\value{enumi}}

\end{enumerate}

\section{Complexity analysis}
\label{sec:Complexity}

\begin{theorem}[QFTC]
\label{thm:qftc}
The required quantum state $\frac{1}{\sqrt{N}} \sum_{k=0}^{N-1} \ket{k}\ket{y_k}$
can be prepared
to accuracy $\epsilon$
\footnote{$\abs{y_k - \tilde{y}_k} < \epsilon$, where $\tilde{y}_k$ is the truncated value of $y_k$ with accuracy $2^{-p_0}$.}
with fidelity $1 - \delta$
\footnote{$\abs{\bra{\Psi^{\text{final}}}(\frac{1}{\sqrt{N}} \sum_{k=0}^{N-1} \ket{k}\ket{\tilde{y}_k})} \geq 1 - \delta$, where $\ket{\Psi^{\text{final}}}$ is the state obtained through the QFTC algorithm.}
using
$\mathcal{O}\big((\log N)^2/(\delta\epsilon)\big)$
one- or two-qubit gates, and $\mathcal{O}\big(1/(\delta\epsilon))$ calls of controlled-$O_x$ and its inverse.
\end{theorem}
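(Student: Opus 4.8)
The plan is to separate the argument into three parts: first verify that the idealised sequence of maps in \ref{item:prep_k}--\ref{item:uncompute} produces $\frac{1}{\sqrt{N}}\sum_k\ket{k}\ket{y_k}$; then bound the error and infidelity introduced by the (inherently probabilistic) amplitude estimation; and finally count gates and oracle calls. Correctness of the idealised map is already built into the step-by-step transformations, the key identities being $\sin^2\theta_k = (1+\abs{\braket{\phi^+|\phi_k}}^2)/2$ and $\abs{\braket{\phi^+|\phi_k}}^2-\abs{\braket{\phi^-|\phi_k}}^2 = y_k$, so the substantive work lies in the last two parts.

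The heart of the error and fidelity analysis is the amplitude estimation of $Q^\pm_k$. I would invoke the standard phase estimation tail bound: with the $p_0$-qubit estimation register of \autoref{fig:whole}, the probability that the recorded value of $\theta_k/\pi$ deviates from its true value by more than $e$ bins of width $2^{-p_0}$ is at most $\frac{1}{2(e-1)}$, uniformly in $\theta_k$. Choosing $e=\mathcal{O}(1/\delta)$ caps this bad weight at $\delta$, and the error on the good branch is then $\abs{\Delta(\theta_k/\pi)}\le e\,2^{-p_0}$. Propagating this through $\abs{\braket{\phi^\pm|\phi_k}}^2 = 2\sin^2\theta_k-1 = -\cos 2\theta_k$, whose derivative in $\theta_k$ is bounded by $2$, and then through the difference $y_k=\abs{\braket{\phi^+|\phi_k}}^2-\abs{\braket{\phi^-|\phi_k}}^2$, yields $\abs{y_k-\tilde y_k}\le 4\pi e\,2^{-p_0}$; requiring this to be below $\epsilon$ forces $2^{p_0}=\mathcal{O}(1/(\delta\epsilon))$. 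For the fidelity I would note that both eigenbranches $\ket{\psi^\uparrow_k},\ket{\psi^\downarrow_k}$ feed the same value $2\sin^2\theta_k-1$ into the arithmetic, so they recombine, and after the uncomputation of \ref{item:uncompute} the good branch disentangles cleanly onto $\ket{k}\ket{\tilde y_k}$. Since the bad weight is at most $\delta$ for each $k$, the overlap with the ideal state is at least $1-\delta$.

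For the complexity I would cost one application of $\mathcal{A}^\pm_k$ and multiply by the number of amplitude estimation iterations. A single $\mathcal{A}^\pm_k$ prepares $\ket{\phi_k}$ and performs the swap test; its cost is dominated by the controlled phase operator of \ref{item:phase} (\autoref{fig:phase}), a grid of $L^2$ controlled-$R_\ell$ gates, giving $\mathcal{O}((\log N)^2)$ one- or two-qubit gates and $\mathcal{O}(1)$ calls to controlled-$O_x$. Phase estimation applies the controlled powers $Q^{2^0},\dots,Q^{2^{p_0-1}}$, i.e. $2^{p_0}-1=\mathcal{O}(1/(\delta\epsilon))$ applications of $Q^\pm_k$ (each one $\mathcal{A}^\pm_k$ together with its inverse). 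Multiplying gives $\mathcal{O}((\log N)^2/(\delta\epsilon))$ gates and $\mathcal{O}(1/(\delta\epsilon))$ controlled-$O_x$ calls. The remaining ingredients, namely the inverse QFT on the $p_0$-qubit register ($\mathcal{O}(p_0^2)$ gates) and the arithmetic of \ref{item:phase_estimation2}--\ref{item:minus} (sine, multiply-adder, and subtractor on $\mathcal{O}(p_0)$-bit registers, see \aref{sec:arithmetic}), are $\mathrm{poly}(\log(1/(\delta\epsilon)))$ and hence subdominant.

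I expect the fidelity bookkeeping to be the main obstacle. The delicate point is that $k$ is held in superposition throughout, so the amplitude estimation, the good/bad branch decomposition, and the uncomputation must all be controlled coherently on $\ket{k}$; one must check that the tail bound holds simultaneously and uniformly across all $\theta_k$, and that the residual ancilla garbage on the bad branches (which does not uncompute) contributes at most $\mathcal{O}(\delta)$ to the infidelity rather than accumulating across the $N$ values of $k$. Because the phase estimation tail bound is independent of the true phase, the per-$k$ bad weight never exceeds $\delta$, and averaging the overlap over the uniform distribution on $k$ keeps the total infidelity at $\mathcal{O}(\delta)$; making this averaging step rigorous, together with confirming that the two eigenbranches recombine exactly after the sine gate, is where most of the care is required.
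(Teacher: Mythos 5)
Your proposal is correct and follows essentially the same route as the paper's proof: it costs one application of $\mathcal{A}^\pm_k$ at $\mathcal{O}\big((\log N)^2\big)$ one- or two-qubit gates plus one controlled-$O_x$ call, multiplies by the $\mathcal{O}\big(1/(\delta\epsilon)\big)$ amplitude-estimation iterations, propagates the estimation error through the bounded-derivative trigonometric map (your explicit $\varepsilon=\Theta(\epsilon)$ calibration via the Lipschitz constant of $-\cos 2\theta_k$ matches the paper's ``derivative of $\sin x$ smaller than one'' remark), and dismisses the quantum-arithmetic cost as polylogarithmic. The only difference is one of granularity: where the paper invokes the accuracy-confidence tradeoff of amplitude estimation \cite{brassard2002} as a black box, you derive it from the phase-estimation tail bound and spell out the eigenbranch-recombination and per-$k$ fidelity bookkeeping that the paper leaves to its references.
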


\begin{proof}

First, we consider the complexity involved in $\mathcal{A}^+_k$ (described in \ref{item:prep_ancilla}--\ref{item:swap}).
It contains Hadamard gates, controlled phase operators and swap gates which can be constructed using $\mathcal{O}\big((\log N)^2\big)$
one- or two-qubit gates and only one call of controlled-$O_x$.

The subsequent amplitude estimation block needs $\mathcal{O}(1/(\delta\varepsilon))$
applications of $Q^+_k = -\mathcal{A}^+_kS_0(\mathcal{A}^+_k)^{\dagger}S_\chi$
to get accuracy $\varepsilon$ with fidelity at least $1-\delta$ \cite{brassard2002,nielsen2010quantum}.
We then use the quantum multiply-adder and sine gate to get the value of $\abs{\braket{\phi^+|\phi_{k}}}^2 = \frac{1}{4}(1+y_k^2)+y_k/2$
for different $\ket{k}$'s in the computational basis. Using the similar procedure to obtain $\abs{\braket{\phi^-|\phi_{k}}}^2$,
we get $y_k = \abs{\braket{\phi^+|\phi_{k}}}^2 - \abs{\braket{\phi^-|\phi_{k}}}^2$ finally.
Since the derivative of $\sin x$ is always smaller than one, we set $\varepsilon = \Theta(\epsilon)$
in order to guarantee accuracy $\epsilon$ in $y_k$.
As detailed in \aref{sec:arithmetic}, the quantum multiply-adders and sine gates have complexity $\mathcal{O}(\text{polylog}(1/\varepsilon))$
which is smaller than $\mathcal{O}(1/\epsilon)$ in amplitude estimation.
Therefore, the complexity of these gates can be omitted.

The total complexity of the proposed circuit will be
$\mathcal{O}\big((\log N)^2/(\delta\epsilon) \big)$ one- or two-qubit gates, and $\mathcal{O}\big(1/(\delta\epsilon) \big)$ calls of controlled-$O_x$ and its inverse.
\end{proof}

Note that we do not calculate $y_k$ directly from $\abs{\braket{\phi^+|\phi_{k}}}^2$ to avoid having to implement a square-root gate, which must be designed carefully
due to the infinite derivative of the square root function at zero. Also we do not use gates to transform the value of $y_k$ directly into the amplitude in quantum state
like many other algorithms do \cite{grover1998,abrams1999,brassard2011,lloyd2013,aimeur2013,wiebe2014}. Instead, we put the value
$\frac{1}{4}(1+y_k^2)+y_k/2$ into the amplitude in order to take the sign of $y_k$ into account.
Throughout the proposed QFTC algorithm, $\ket{k}$ in the first register is used to control the application of quantum operators acting on other registers,
giving us the advantage of parallel calculating $y_k$ for all $k$,
which is the main reason the QFTC algorithm outshines the FFT in complexity.
Though values of $y_k$'s cannot be obtained by a single measurement of $\frac{1}{\sqrt{N}}\sum_{k=0}^{N-1}\ket{k}\ket{y_k}$,
they can be used widely in subsequent quantum computation once they are encoded in the computational basis.

\section{Application}
\label{sec:applications}

One important family of operators are the circulant matrices
which have found important applications in, for example, photonic quantum walks \cite{Delanty2012}, investigation on quantum supremacy \cite{qiang2015}, biochemical modelling \cite{yoneda2014}, vibration analysis~\cite{Olson2014}, and parallel diagnostic algorithm for super-computing \cite{Cheng2013}.

Circulant matrices are defined as follows \cite{golub2012}:
\begin{equation}
\label{eq:circulant}
C =
\begin{pmatrix}
c_0 &c_1 &\cdots & c_{N-1} \\
c_{N-1} &c_0 &\cdots &c_{N-2}\\
\vdots & \vdots & \ddots & \vdots\\
c_1 & c_2 & \cdots & c_0
\end{pmatrix},
\end{equation}
using an $N$-dimensional vector $\gv{c} = (c_0~c_1~\cdots~c_{N-1})$.
Such matrices are diagonalizable by the discrete Fourier transform (DFT), i.e.
\begin{equation}
C=
F \Lambda F^\dagger,
\end{equation}
where $F$ is the Fourier matrix with $F_{kj} = e^{2\pi ijk/N}/\sqrt{N}$, and $\Lambda$ is a diagonal matrix of eigenvalues given by $\Lambda_k = \sqrt{N} \big( F (c_0~c_1~\cdots~c_{N-1})^{\dagger} \big)_k \equiv \sqrt{N}F_k$. (Note that the condition that $C$ is Hermitian (in order to be a Hamiltonian) is equivalent to our assumption in \autoref{sec:defs} that the Fourier coefficients $F_k$ are real.) Due to this property, we are able to implement circulant quantum operators (non-unitary in general) using the conventional QFT through the manipulation of amplitudes, as detailed in \aref{sec:circulantoperator}.

However, this approach cannot be used directly for simulation of (non-sparse) circulant Hamiltonians, where we need to perform
\begin{equation}
e^{-iCt} \sum_{k=0}^{N-1}s_k\ket{k} = e^{-iCt} \ket{s} = \text{QFT} \;e^{-i\Lambda t}\; \text{QFT}^\dagger \ket{s}.
\end{equation}
The above operation $e^{-i\Lambda t}$ requires controlled quantum logic gates, which depend on the Fourier coefficients; this requires encoding of the Fourier coefficients in the computational basis, as performed by the QFTC algorithm.

In the following, we will demonstrate how the QFTC algorithm can be used to simulate Hamiltonians with a circulant matrix structure, as shown in \autoref{fig:diagonal}, with the aid of the quantum circuit given in simulating diagonal Hamiltonians \cite{childs2004}:

\begin{figure}[ht]
\[
    \Qcircuit @C=1em @R=1em {
     \lstick{\ket{s}} 	& \gate{\text{QFT}^\dagger} 	& \ctrl{1} 	& \qw 	& \ctrl{1} 	&\gate{\text{QFT}}	&\rstick{e^{-iCt}\ket{s}} \qw \\				
     \lstick{\ket{0}} 	&\qw 	&\multigate{4}{\text{QFTC}}		& \gate{e^{+2^{L/2}it \ket{1}\bra{1}}}		& \multigate{4}{\text{QFTC}^\dagger}	&\qw	& \rstick{\ket{0}} \qw\\		
     \lstick{\ket{0}} 	&\qw 	&\ghost{\text{QFTC}}				& \gate{e^{-2^{L/2-1}it \ket{1}\bra{1}}} 	& \ghost{\text{QFTC}^\dagger}	& \qw	&\rstick{\ket{0}} \qw\\	
     \lstick{\ket{0}} 	&\qw 	&\ghost{\text{QFTC}}				& \gate{e^{-2^{L/2-2}it \ket{1}\bra{1}}} 	& \ghost{\text{QFTC}^\dagger}	& \qw	&\rstick{\ket{0}} \qw\\	
     \lstick{\vdots~} 	& 	&\pureghost{\text{QFTC}}			& \vdots 			& \pureghost{\text{QFTC}^\dagger} 	&	& ~~~~~~\vdots\\
     \lstick{\ket{0}} 	&\qw	& \ghost{\text{QFTC}} 			& \gate{e^{-2^{L/2-p_0}it \ket{1}\bra{1}}} 		& \ghost{\text{QFTC}^\dagger}	& \qw	&\rstick{\ket{0}} \qw\\	
    }
\]
\caption{Simulation of circulant Hamiltonians. $p_0+1$ is the number of digits of the resulting Fourier coefficients and $F_k$ was encoded in the form $f_0.f_1f_2\cdots f_{p_0}$ as the complemental code for a number between $-1$ and $1$. Here we define $\text{QFTC}\ket{k}\ket{0} = \ket{k}\ket{F_k}$ (detailed in \ref{item:prep_ancilla}--\ref{item:uncompute} in \autoref{sec:qftc}).}
\label{fig:diagonal}
\end{figure}
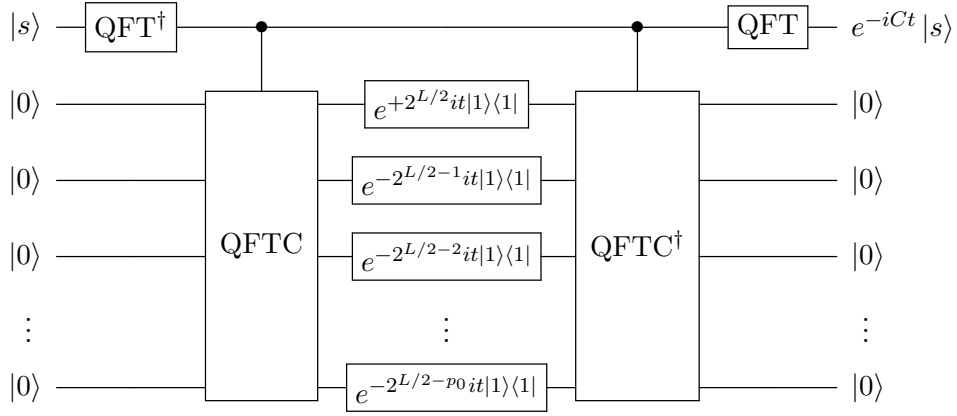

\begin{enumerate}[label=\emph{Step~\arabic*}]
\item
Perform the inverse QFT on $\ket{s}$:
\begin{equation}
\sum_{k=0}^{N-1} s_k \ket{k} \rightarrow \sum_{k=0}^{N-1} \mathfrak{s}_k \ket{k}.
\end{equation}

\item
Apply the QFTC algorithm (\ref{item:prep_ancilla}--\ref{item:uncompute} in \autoref{sec:qftc}) for $\gv{c}$:
\begin{equation}
\sum_{k=0}^{N-1} \mathfrak{s}_k \ket{k} \rightarrow \sum_{k=0}^{N-1} \mathfrak{s}_k \ket{k}\ket{F_k}.
\end{equation}

\item
Do controlled phase gate $e^{+2^{L/2}it\ket{1}\bra{1}}$ on the first digit (qubit) of $\ket{F_k}$ and $e^{-2^{L/2-p+1}it\ket{1}\bra{1}}$ on the $p$th digit (qubit) of $\ket{F_k}$ for all $p>1$:
\begin{equation}
\sum_{k=0}^{N-1} \mathfrak{s}_k \ket{k}\ket{F_k} \rightarrow \sum_{k=0}^{N-1} \mathfrak{s}_k e^{-i\Lambda_k t}\ket{k}\ket{F_k}.
\end{equation}

\item
Undo the QFTC for every $\ket{k}$:
\begin{equation}
\sum_{k=0}^{N-1} \mathfrak{s}_k e^{-i\Lambda_k t}\ket{k}\ket{F_k} \rightarrow \sum_{k=0}^{N-1} \mathfrak{s}_k e^{-i\Lambda_k t}\ket{k}.
\end{equation}

\item
Perform the QFT:
\begin{equation}
\sum_{k=0}^{N-1} \mathfrak{s}_k e^{-i\Lambda_k t}\ket{k} \rightarrow e^{-iCt}\ket{s}.
\end{equation}

\end{enumerate}

\begin{theorem}[Simulation of Circulant Hamiltonians]
\label{thm:hamiltonian}
The simulation of a circulant Hamiltonian $e^{-iCt}$
can be performed within error $\delta$
\footnote{$\|e^{-iCt} - \widetilde{e^{-iCt}}\|\leq \delta$, where $\widetilde{e^{-iCt}}$ represents the operator that is actually performed by this algorithm.}
using
$\mathcal{O}\big(\sqrt{N}t(\log N)^2/\delta^{3/2}\big)$ one- or two-qubit gates,
as well as $\mathcal{O}\big(\sqrt{N}t/\delta^{3/2})$ calls of controlled-$O_x$ and its inverse, where $\gv{x} = \gv{c}$ is a unit vector in $\mathbb{C}^{N}$ and $C$ is Hermitian.
\end{theorem}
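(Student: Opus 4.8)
The plan is to first establish exact correctness and then layer in the two error sources introduced by the approximate QFTC subroutine. In the exact case I would write the circuit of \autoref{fig:diagonal} as the operator product $\text{QFT}\,V^{\dagger}D\,V\,\text{QFT}^{\dagger}$, where $V=\text{QFTC}$ acts as $V\ket{k}\ket{0}=\ket{k}\ket{F_k}$ and $D$ is the product of single-qubit phase rotations that, reading the complemental encoding $f_0.f_1\cdots f_{p_0}$ of $F_k$ and using $2^{L/2}=\sqrt{N}$, multiplies the $\ket{k}$-branch by $e^{-i\sqrt{N}F_k t}=e^{-i\Lambda_k t}$. Then $V^{\dagger}DV$ is diagonal in the computational basis with entries $e^{-i\Lambda_k t}$, so conjugating by the Fourier matrix gives $e^{-iCt}=F\,e^{-i\Lambda t}F^{\dagger}$ exactly. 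I would also record the two structural facts the bounds rest on: $C$ Hermitian $\Leftrightarrow$ the $F_k$ are real, so the real encoding assumed in \autoref{sec:defs} suffices; and $\gv{c}$ a unit vector forces $\sum_k|F_k|^2=1$, hence $|\Lambda_k|\le\sqrt{N}$, which is the origin of the $\sqrt{N}$ factor.

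Next I would propagate the two QFTC parameters --- its infidelity $\delta_Q$ (the $\delta$ of Theorem~\ref{thm:qftc}) and the stored-coefficient precision $\epsilon$ --- through this sandwich. The precision appears as a $k$-dependent phase error $\sqrt{N}\,t\,(\tilde F_k-F_k)$ on each eigenvalue. The key observation, and the reason the exponent is $3/2$ rather than the naive $2$, is that amplitude estimation delivers $\tilde F_k$ as a \emph{coherent} superposition centred on $F_k$, so the effective diagonal entry is $\mathbb{E}\big[e^{-i\sqrt{N}t(\tilde F_k-F_k)}\big]$; the first-order phase error cancels and the leading residual is second order, $O\big((\sqrt{N}t\,\epsilon)^{2}\big)=O(Nt^{2}\epsilon^{2})$. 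The infidelity $\delta_Q$ --- the weight of the amplitude-estimation tails together with the ancilla that then fails to uncompute --- enters at its own order. Balancing $\delta\sim\delta_Q+Nt^{2}\epsilon^{2}$ by the choice $\epsilon=\Theta\big(\sqrt{\delta}/(\sqrt{N}t)\big)$ and $\delta_Q=\Theta(\delta)$ makes both terms $O(\delta)$, and one verifies $\sqrt{N}t\,\epsilon=\Theta(\sqrt{\delta})\le1$ so that the second-order expansion is legitimate.

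With the parameters fixed the complexity is bookkeeping. The circuit calls QFTC and its inverse a constant number of times, so by Theorem~\ref{thm:qftc} it costs $\mathcal{O}\big((\log N)^{2}/(\delta_Q\epsilon)\big)$ one- or two-qubit gates and $\mathcal{O}\big(1/(\delta_Q\epsilon)\big)$ controlled-$O_x$ calls; substituting $\delta_Q=\Theta(\delta)$ and $\epsilon=\Theta\big(\sqrt{\delta}/(\sqrt{N}t)\big)$ gives $1/(\delta_Q\epsilon)=\Theta\big(\sqrt{N}t/\delta^{3/2}\big)$, which reproduces both claimed bounds. The flanking QFT and $\text{QFT}^{\dagger}$ and the $p_0+1=O\big(\log(\sqrt{N}t/\delta)\big)$ controlled phase gates are lower order and are absorbed.

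The hard part will be making the second paragraph rigorous. Two points need genuine care. First, amplitude/phase estimation is only approximately unbiased: the discretisation bias in $\tilde F_k$ must be shown to be second order (or $o(\epsilon)$), since any residual first-order bias reinstates a linear $\sqrt{N}t\,\epsilon$ term and drags the scaling back to $\delta^{2}$. Second, the tail weight $\delta_Q$ leaves an ancilla that $V^{\dagger}$ does not fully uncompute, so I must show this leakage affects $\|e^{-iCt}-\widetilde{e^{-iCt}}\|$ only at the order used in the balancing rather than as $\sqrt{\delta_Q}$; doing so requires partitioning the heavy $1/d^{2}$ tails of the phase-estimation distribution cleanly into the $\delta_Q$ budget. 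Establishing these two estimates is the real content of the proof --- the five transformation steps and the gate count are routine by comparison.
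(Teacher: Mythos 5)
Your exact-case decomposition, the structural facts ($C$ Hermitian $\Leftrightarrow$ $F_k$ real; $\abs{\Lambda_k}\le\sqrt{N}$ from unitarity of the DFT), the parameter balancing $\varepsilon=\Theta\big(\sqrt{\delta}/(\sqrt{N}t)\big)$, $\delta_Q=\Theta(\delta)$, and the final gate counts all coincide with the paper's proof. But the central quantitative step — why the precision error enters at second order — rests on a mechanism the paper does not use and that in fact cannot be established. You attribute the $\mathcal{O}\big((\sqrt{N}t\varepsilon)^2\big)$ term to first-order cancellation in $\mathbb{E}\big[e^{-i\sqrt{N}t(\tilde F_k-F_k)}\big]$, i.e.\ to approximate unbiasedness of the amplitude-estimation output, and you declare proving that unbiasedness to be ``the real content'' of the proof. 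By \autoref{thm:qftc}, however, the good branch of the QFTC output carries a \emph{deterministic} truncated value $\tilde{y}_k$ with $\abs{y_k-\tilde{y}_k}<\epsilon$: a floor-type truncation whose systematic deviation is itself of order $\epsilon$, so the cancellation you propose to prove is false at exactly the order you need it. Had that cancellation really been necessary, the theorem would fail; it is not necessary.

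The actual mechanism is elementary and worst-case. The precision error enters as pure phases on orthogonal eigenbranches, so the relevant quantity is the modulus of a convex combination of unit phasors,
\begin{equation}
(1-\delta)^2\abs{\sum_{k=0}^{N-1} e^{i(\tilde{\Lambda}_k-\Lambda_k)t}\abs{\mathfrak{s}_k}^2},
\qquad \abs{(\tilde{\Lambda}_k-\Lambda_k)t}\le \sqrt{N}t\varepsilon ,
\end{equation}
and the paper bounds it via the inequality $\abs{e^{i\gamma_1}+\abs{\Gamma}e^{i\gamma_2}}>(1+\abs{\Gamma})\abs{\cos\frac{\gamma_1-\gamma_2}{2}}$, applied recursively, to obtain $1-\mathcal{O}\big((\sqrt{N}t\varepsilon)^2+\delta\big)$ even for adversarially signed, maximally biased deviations: a common phase offset does not reduce the modulus at first order, only the phase \emph{dispersion} does, and the dispersion is bounded by $2\sqrt{N}t\varepsilon$, contributing quadratically (in essence $\mathrm{Re}\sum_k\abs{\mathfrak{s}_k}^2 e^{i\theta_k}\ge\cos(\sqrt{N}t\varepsilon)$). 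Likewise, the amplitude-estimation tails you propose to partition carefully are already priced into the QFTC fidelity $1-\delta$, appearing only as the multiplicative $(1-\delta)^2$ for the QFTC/QFTC$^\dagger$ pair; no separate analysis of the heavy phase-estimation tails is needed. One caution in your favor: your worry about a linear $\sqrt{N}t\varepsilon$ term is legitimate if one insists on the operator-norm error of the theorem's footnote, since $\max_k\abs{1-e^{-i(\tilde{\Lambda}_k-\Lambda_k)t}}=\mathcal{O}(\sqrt{N}t\varepsilon)$ is genuinely first order — but the paper's proof (and its $\delta^{3/2}$ scaling) is carried out for the fidelity figure of merit, where the quadratic bound above holds unconditionally. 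Replacing your cancellation step with this worst-case phasor bound, the remainder of your proposal goes through and reproduces the paper's argument.
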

\begin{proof}

The error present in the Hamiltonian simulation is fully determined by the precision of the QFTC algorithm. According to the above QFTC complexity analysis, we need
$\mathcal{O}\big((\log N)^2/(\delta\varepsilon)\big)$ one- or two-qubit gates, as well as $\mathcal{O}\big(1/(\delta\varepsilon))$ calls of controlled-$O_x$ and its inverse, to achieve accuracy $\varepsilon$ in $F_k$.
The fidelity achieved for the Hamiltonian simulation, as defined by the squared modulus of inner product, is
\begin{equation}
(1-\delta)^2 \abs{\langle e^{-i\tilde{C}t}\ket{s}, e^{-iCt}\ket{s} \rangle} = (1-\delta)^2 \abs{\sum_{k=0}^{N-1} e^{i (\tilde{\Lambda}_k-\Lambda_k) t} \abs{\mathfrak{s}_k}^2}
> 1 - \mathcal{O}((\sqrt{N}t\varepsilon)^2 + \delta),
\end{equation}
where the last inequality is derived using
\begin{equation}
\abs{e^{i\gamma_1}+ \abs{\Gamma} e^{i\gamma_2}} = \big( 1 + \abs{\Gamma}^2 + 2 \abs{\Gamma} \cos(\gamma_1-\gamma_2) \big)^{1/2}
> (1 + \abs{\Gamma}) \abs{\cos \frac{\gamma_1 - \gamma_2}{2}},
\end{equation}
and $\tilde{\Lambda}_k$ are the estimated (truncated) eigenvalues calculated via the QFTC algorithm. For a fixed $\delta$ in the QFTC algorithm, if we choose $\varepsilon = \sqrt{\delta}/(\sqrt{N}t)$,
the fidelity will be $1 - \mathcal{O}(\delta)$. We then need $\mathcal{O}\big((\log N)^2/(\delta\varepsilon)\big)$ =
$\mathcal{O}\big(\sqrt{N}t(\log N)^2/\delta^{3/2}\big)$ one- or two-qubit gates,
as well as $\mathcal{O}\big(\sqrt{N}t/\delta^{3/2})$ calls of controlled-$O_x$ and its inverse.

\end{proof}

\section{Conclusion}
\label{sec:conclusion}

In this paper, we proposed a new QFTC algorithm, an efficient quantum algorithm to encode the results of the Discrete Fourier Transform
in the computational basis. This algorithm allows us to overcome a main shortcoming of the conventional Quantum Fourier Transform --
the inability to perform operations controlled by the Fourier coefficients.
In short, the QFTC utilizes swap tests to obtain a function of
the Fourier coefficients in the amplitudes, with individual coefficients then extracted via amplitude estimation and quantum arithmetic.

Secondly, a detailed complexity analysis of the QFTC algorithm was performed, finding it require
$\mathcal{O}\big((\log N)^2/(\delta\epsilon)\big)$ calls of one- or two-qubit gates, as well as  $\mathcal{O}\big(1/(\delta\epsilon)\big)$ calls of controlled-$O_x$ and its inverse, in order to achieve fidelity $1-\delta$ and  precision $\epsilon$.
Note that the overall complexity depends polylogarithmically on $N$, similarly to the conventional QFT, and we require only controlled phase gates and Hadamard gates.
The inverse proportionality with the desired accuracy, $\epsilon$,
occurs due to the application of amplitude estimation within the algorithm.

Finally, we detailed an application of the QFTC algorithm in the simulation of circulant Hamiltonians, which requires
$\mathcal{O}\big(\sqrt{N}t(\log N)^2/\delta^{3/2}\big)$ one- or two-qubit gates,
as well as $\mathcal{O}\big(\sqrt{N}t/\delta^{3/2})$ calls of controlled-$O_x$ and its inverse to achieve fidelity $1- \delta$.
This paves the way for a quantum circuit implementation of continuous-time quantum walks on circulant graphs,
with potential applications in a wide array of disciplines. Further applications of the QFTC algorithm are also expected.

\section{Acknowledgements}
The authors would like to thank Ashley Montanaro for constructive comments, and Jeremy O'Brien, Jonathan Matthews, Xiaogang Qiang, Lyle Noakes for helpful discussions.

\bigskip

\appendix

\section{Quantum Arithmetic}
\label{sec:arithmetic}

Addition and multiplication are basic elements of arithmetic in classical computer.
There have been several proposals on how to build quantum adders and multipliers \cite{cuccaro2004,draper2006,alvarez-sanchez2008,vedral1996},
constructed predominately using CNOT gates and Toffoli gates.
Draper's addition quantum circuits, however, utilizes the quantum Fourier transformation (QFT) \cite{draper2000}.
QFT-based multiplication and related quantum arithmetic have also been proposed \cite{ruiz-perez2014,maynard2013,maynard2013-a,pavlidis2014}.
In this appendix, for completeness, we outline the construction of the quantum arithmetic gates required for the QFTC algorithm in detail.

We show here, using QFT-based circuits and fixed-point number representation,
that all elementary quantum arithmetic gates used to construct the QFTC circuit (including adders, multipliers and cosine gates) have $\mathcal{O}(\mathrm{poly}(n))$ complexity,
when $n$ is the number of qubits (number of digits) representing a number.
With accuracy $\epsilon$, this results in $\mathcal{O}(\mathrm{polylog}(1/\epsilon))$ complexity.

\subsection{QFT Multiply-adder}
\label{subsec:QMA}

We begin by describing a quantum multiply-adder for real inputs $a$ and $b$ between 0 and 1.
Let $\ket{a} = \ket{a_1}\ket{a_2}\cdots\ket{a_m}$ represent the fixed point number $a = 0.a_1 a_2 \cdots a_m$ (same for $b$).
Using this representation, the quantum multiply-adder (QMA), as shown in \autoref{fig:QMA1}(a), can realize the following transformation,
\begin{equation}
\label{eq:multiply-adder}
\Pi^\pm_{m,n}\ket{a}\ket{b}\ket{c} = \ket{a}\ket{b}\ket{c \pm a\cdot b},
\end{equation}
where $m$ and $n$ denote the number of digits of $a$ and $b$ respectively.

In quantum multiply-adders, the outputs, unlike the inputs, can be negative and we use the complemental code $c^{(C)} = c_0.c_1 c_2 \cdots c_{m+n} \in [0,2)$ to represent the output $c \in (-1,1)$ and $c = c^{(C)}$ if $c$ is non-negative and $c = c^{(C)}-2$ if $c$ is negative. $\ket{c}$ is composed of $\ket{c_0}\ket{c_1}\cdots\ket{c_{m+n}}$. Note that this quantum multiply-adder also applies to any fixed-point-represented numbers by cleverly choosing the appropriate positions of the fractional points.


\begin{figure}[htbp]
\centering
\subfigure[~quantum multiply-adder]{
\includegraphics[width=4.9in]{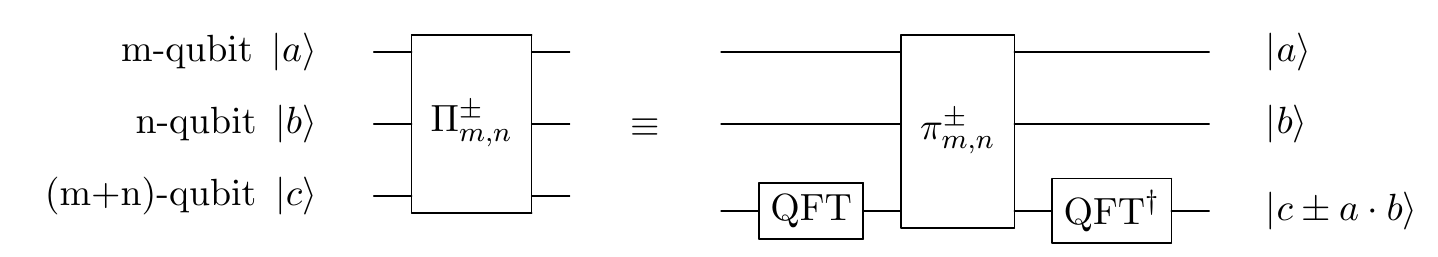}}\\
\subfigure[~intermediate multiply-adder]{
\includegraphics[width=2.2in]{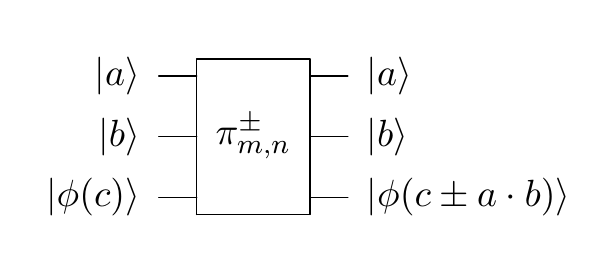}}
\caption{Quantum circuit of the multiply-adder}
\label{fig:QMA1}
\end{figure}

The quantum multiply-adder can be decomposed into the following form, as shown in \autoref{fig:QMA1}(b):
\begin{equation}
	\Pi^\pm_{m,n} = (\mathbb{I}\otimes \mathbb{I}\otimes \mathrm{QFT}^\dagger)\cdot \mathfrak{\pi}^\pm_{m,n} \cdot (\mathbb{I}\otimes \mathbb{I}\otimes \mathrm{QFT}),
\end{equation}
where $\mathfrak{\pi}^\pm_{m,n}$ represents an intermediate quantum multiply-adder,
\begin{equation}
	\pi^\pm_{m,n}\ket{a}\ket{b}\ket{\phi(c)}=\ket{a}\ket{b}\ket{\phi(c\pm a\cdot b)}
\end{equation}
with $\ket{\phi(c)}:=\mathrm{QFT}\ket{c}$ and $\ket{\phi_{k}(c)} = \frac{1}{\sqrt{2}}(\ket{0}+e^{2\pi i c\cdot2^{m+n-k}}\ket{1})$, $k=1,2,\cdots,m+n+1$.

\autoref{fig:QMA2} shows a detailed quantum circuit construction of $\mathfrak{\pi}^\pm_{m,n}$, using the QFT adders $2^{-l} \Sigma^\pm_{m,n}$, which act as follows:
\begin{equation}
\label{eq:A}
\boxed{2^{-l} \Sigma^\pm_{m,n}}\ket{b}\ket{\phi(c)} = \ket{b}\ket{\phi(c\pm2^{-l} b)}.
\end{equation}
The QFT adders are constructed via controlled phase operations, as shown in \autoref{fig:subfig:Acircuit}.

\newsavebox{\smlmat}
\savebox{\smlmat}{$R^\pm_k =\left(\begin{smallmatrix}1 & 0 \\0 & \exp(\pm2\pi i /2^k)\end{smallmatrix}\right)$}

\begin{figure}[htbp]
\centering
\subfigure[~ $\mathfrak{\pi}^\pm_{m,n}$ gate]{
\label{fig:subfig:QMAcircuit} 
\includegraphics[width=6in]{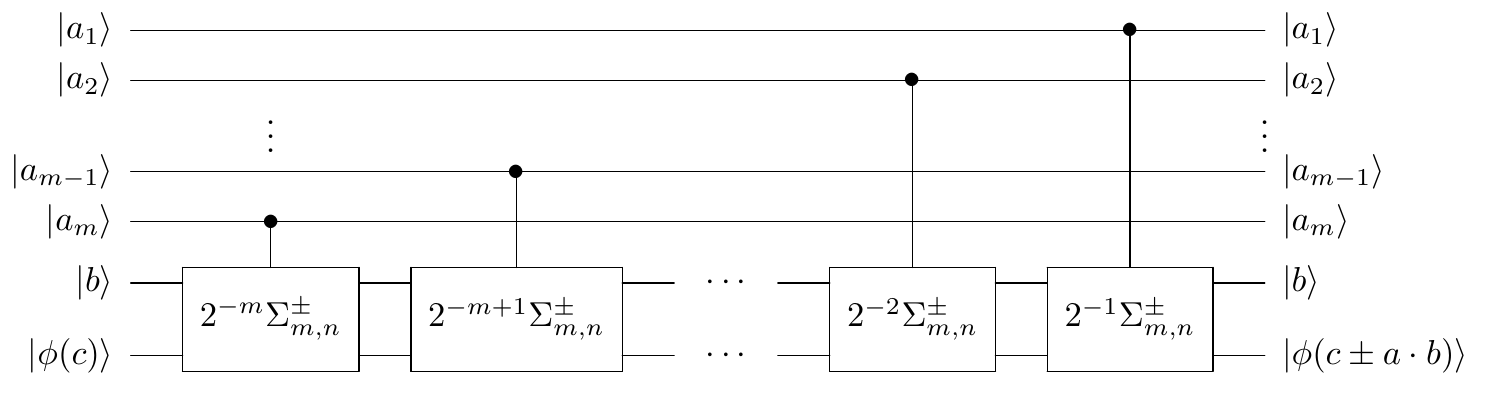}}
\subfigure[~ QFT adder]{
\label{fig:subfig:A}  
\includegraphics[width=2.5in]{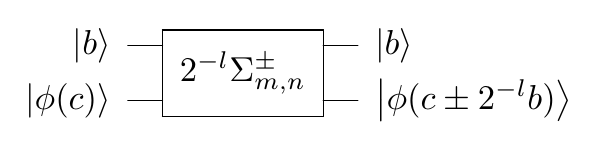}}
\subfigure[~Detailed quantum circuit construction of the QFT adder $2^{-l} \Sigma^\pm_{m,n}$, $R^\pm_k = \ket{0}\bra{0}+e^{\pm2\pi i/2^{k}}\ket{1}\bra{1}$.]{
\label{fig:subfig:Acircuit}  
\includegraphics[width=6.4in]{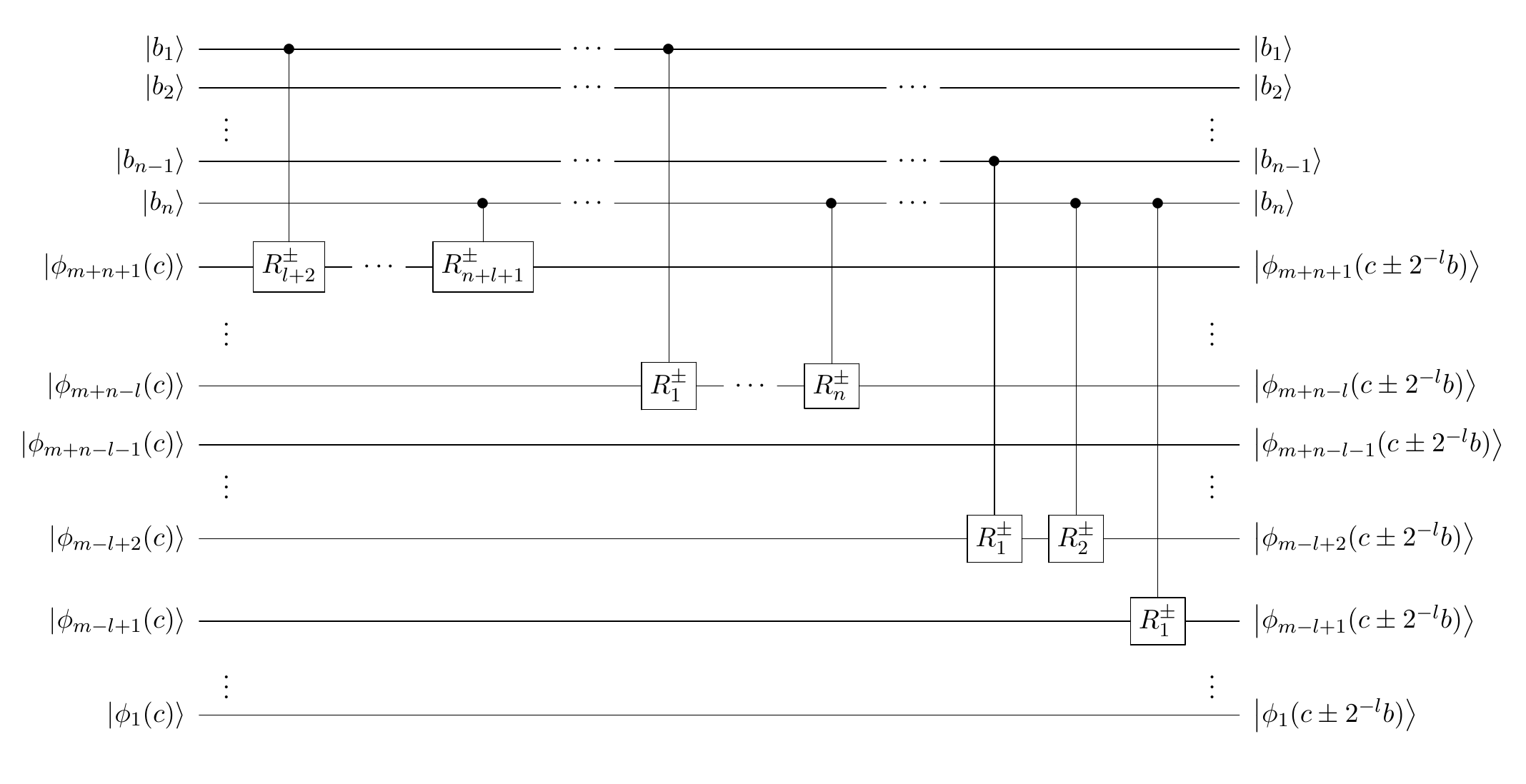}}
\caption{Quantum circuit of $\mathfrak{\pi}^\pm_{m,n}$} \label{fig:QMA2} 
\end{figure}

After applying the QFT adder $2^{-m} \Sigma^\pm_{m,n}$ (controlled by $\ket{a_m}$) in \autoref{fig:subfig:QMAcircuit}, we get
\begin{equation}
\ket{\phi(c)} \longrightarrow \ket{\phi(c \pm a_m 2^{-m} b)}.
\end{equation}
Proceeding in a similar fashion, it can be seen that the final output state of the intermediate multiply-adder is
\begin{equation}
\ket{\phi(c + a_m 2^{-m} b +\cdots+ a_1 2^{-1}b)} = \ket{\phi(c \pm a \cdot b)}.
\end{equation}

To illustrate how the circuit works, take for example the evolution of $\phi_{m+n-l}(c)$ after $R^\pm_1,\ldots,R^\pm_n$:
\begin{equation}
\ket{0} + e^{2\pi i c \cdot 2^{l}}\ket{1}
\longrightarrow
\ket{0} + e^{2\pi i c \cdot 2^{l} \pm b }\ket{1}.
\end{equation}
We then have $$\ket{\phi_{k}(c)} \rightarrow \ket{\phi_{k}(c\pm 2^{-l}b)}.$$

It is clear from \autoref{fig:subfig:Acircuit} that the QFT adder uses $\mathcal{O}\big((m+n)n\big)$ one- or two-qubit gates.
Hence, the total complexity of the intermediate QFT multiply-adder's is $\mathcal{O}\big((m+n)mn\big)$.
Thus, with QFT scaling $\mathcal{O}\big((m+n)^2\big)$, the total complexity of the quantum multiply-adder $\Pi^\pm_{m,n}$ is $\max\{\mathcal{O}(mn^2),\mathcal{O}(nm^2)\}$.

Note that if we choose $l=0$ in $2^{-l} \Sigma^\pm_{m,n}$ and perform a QFT and an inverse QFT before and after the application of the QFT adder in \autoref{eq:A}, we have a quantum adder
\begin{equation}
\label{eq:adder}
\ket{b}\ket{c} \rightarrow \ket{b}\ket{c\pm b}.
\end{equation}
We can also add (or subtract) two numbers without having to destroy their original values encoded in the computational basis, i.e.,
\begin{equation}
\ket{b}\ket{c}\ket{0} \rightarrow \ket{b}\ket{c}\ket{b} \rightarrow \ket{b}\ket{c}\ket{b \pm c}
\end{equation}
by using \autoref{eq:adder} twice.

\subsection{Quantum sine \& cosine gate}
\label{subsec:cosgate}

By implementing the Taylor series using the quantum multiply-adder, we are able to build a quantum sine (and cosine) gate.
Suppose $x = 0. x_1 x_2 \cdots x_n$ and $x \in [0,1)$. We aim to build a sine gate calculating the value of $\sin \pi x$, performing $\ket{x}\ket{0^n}\ket{0^m}\rightarrow \ket{x}\ket{\sin \pi x}\ket{\Psi^{\text{ancilla}}}$.

\begin{figure}[htbp]
\centering
\subfigure[~ sine gate]{
\label{fig:subfig:Scircuit}  
\includegraphics[width=6.4in]{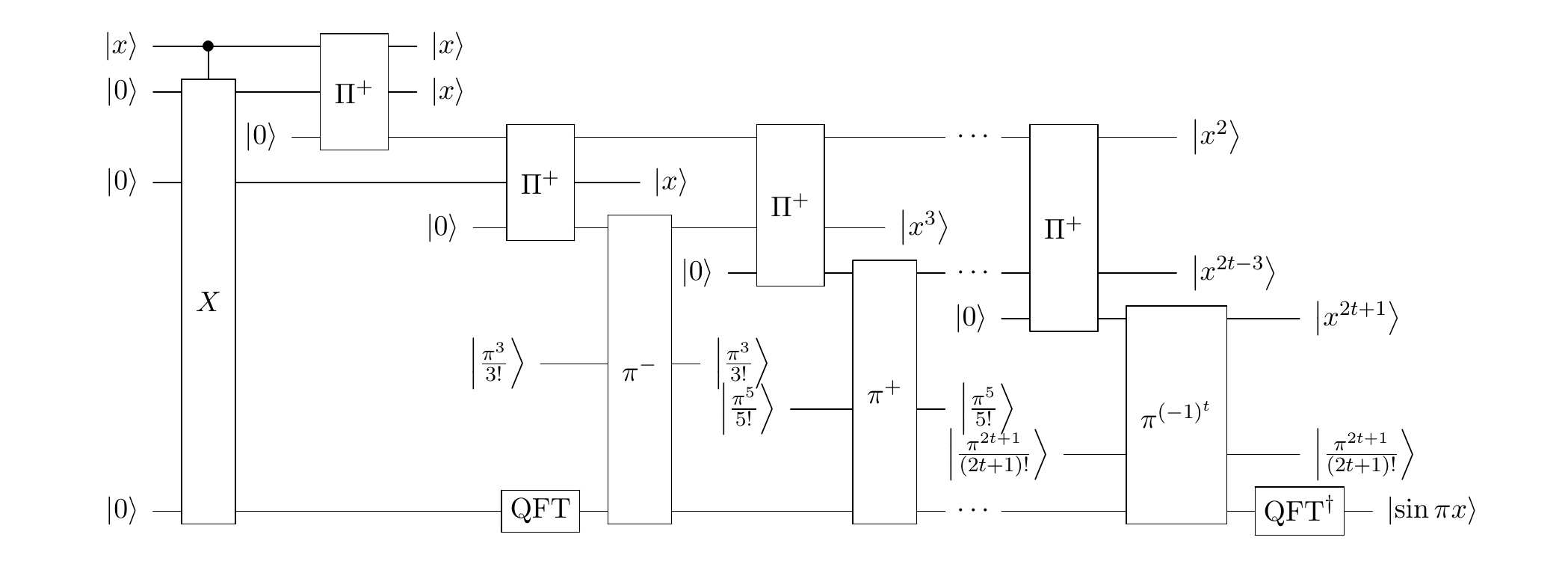}}
\subfigure[~ cosine gate]{
\label{fig:subfig:Ccircuit} 
\includegraphics[width=6.4in]{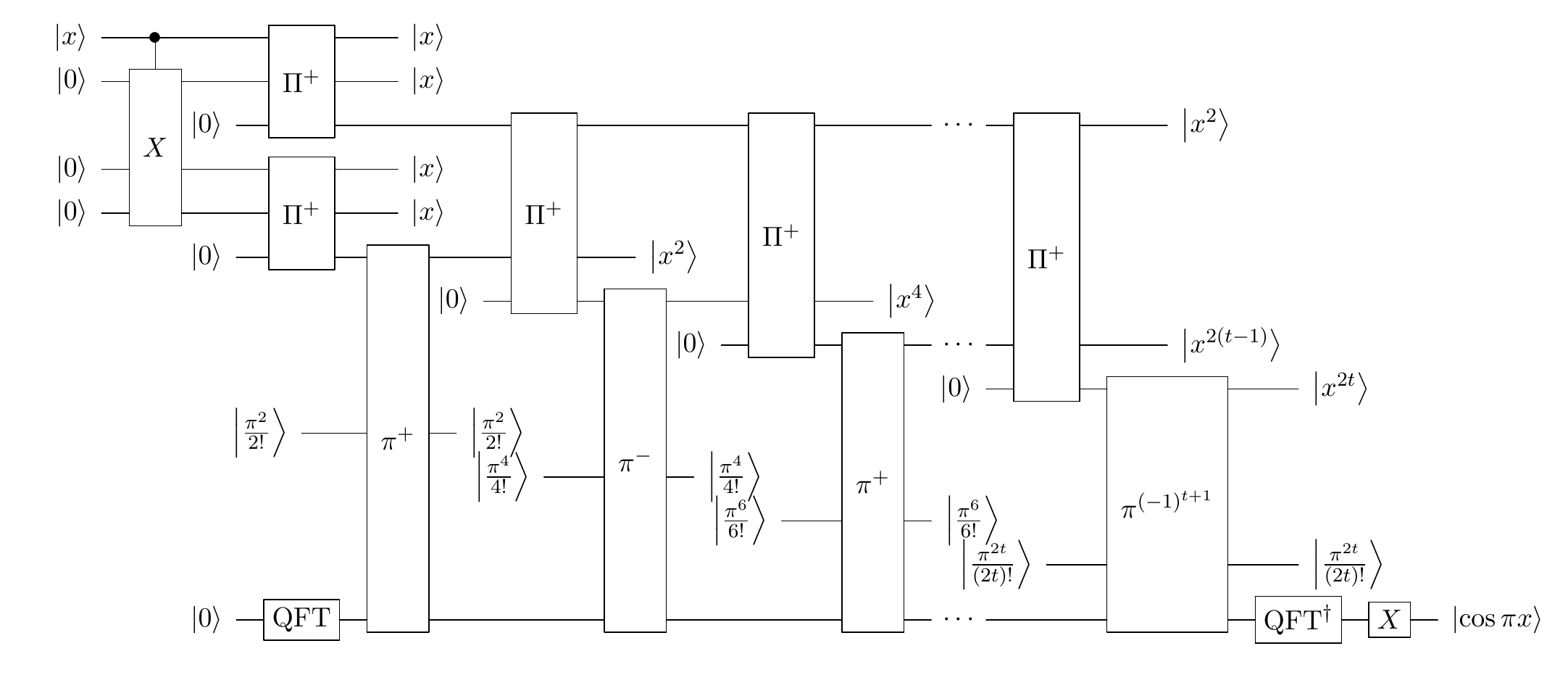}}
\caption{Quantum circuits of the sine and cosine gates ($\ket{0}$ represents a number of qubits in above circuits where the numbers are omitted). Pauli-X gates are used to transform $\ket{0}$ into $\ket{x}$ and the subscript for all the quantum multiply-adders in above circuits is $(p',p')$.} \label{fig:CScircuit} 
\end{figure}

We now consider the error in the truncated Taylor series. First, the error introduced by imprecision in the $n$-digit representation of $x$ is $\mathcal{O}(2^{-n})$, since the derivative of $\sin \pi x$ is bounded. The Taylor series of $\sin \pi x$ at around $x=0$ is
\begin{equation}
\label{eq:SIN}
\sin \pi x = \pi x - \frac{(\pi x)^3}{3!} + \frac{(\pi x)^5}{5!} - \cdots + (-1)^{t}\frac{(\pi x)^{2t+1}}{(2t+1)!} + \frac{(-1)^{t+1}\cos \pi z}{(2t+3)!}(\pi x)^{(2t+3)}.
\end{equation}
The remainder term for the $k$th term in the expansion is $\frac{f^{(k+1)}(z)}{(k+1)!}x^{k+1}$, where $z\in (0, x)$, according to Taylor's Theorem \cite{kline1998calculus}.
As a result, in Eq.~\eqref{eq:SIN}, the reminder term (error) is $\frac{(-1)^{t+1}\cos \pi z}{(2t+3)!}(\pi x)^{(2t+3)}$ and is obviously bounded by $\mathcal{O}(2^{-n})$ for $t=n$.

In the sine gate, the $t+1$ terms $\left\{\pi x,\frac{(\pi x)^3}{3!},\cdots,(-1)^{t}\frac{(\pi x)^{2t+1}}{(2t+1)!}\right\}$ are first calculated and then added (or subtracted) together. Suppose each of the $t+1$ terms has an error within $2^{-p}$. Taking $p = n + \lceil\log n\rceil = \mathcal{O}(n)$, the error introduced by adding and subtracting will be $\mathcal{O}(t\cdot 2^{-p}) = \mathcal{O}(2^{-n})$. Suppose all multiply-adders have $p'$ digits inputs.
When errors in $y_1,y_2$ are within $2^{-(\ell+1)}$ and $y_1,y_2\leq 1-2^{-(\ell+1)}$,
$(y_1+2^{-(\ell+1)})(y_2+2^{-(\ell+1)})
= y_1y_2 + 2^{-\ell}(y_1+y_2)/2  + 2^{-2\ell-2} \leq y_1y_2 + 2^{-\ell}.$
It means that by applying the multiply-adders $2t$ times, the error will be $2^{2t}$ times larger. Thus we can choose a $p'=\mathcal{O}(p+2t)=\mathcal{O}(n)$ which guarantees accuracy $2^{-p}$ in all the powers of $x$ and also all the $t+1$ terms in the Taylor series.

We conclude that we can choose $t = \mathcal{O}(n)$ and $p' = \mathcal{O}(n)$ so that the total accuracy of the sine gate is bounded by $2^{-n}$.
\autoref{fig:CScircuit} shows the quantum circuit for the sine and cosine gate.
The complexity of the quantum sine gate can be calculated based on the scaling of quantum multiply-adders which equals to $\mathcal{O}(p'^3)$.
The total complexity of the quantum sine gate is $\mathcal{O}(t p'^3) = \mathcal{O}(n^4)$ for accuracy $2^{-n}$. To put it in another way, $\mathcal{O}(\text{polylog}(1/\epsilon))$ one- or two-qubit gates are required to achieve accuracy $\epsilon$.

\section{Implementing circulant operators}
\label{sec:circulantoperator}

\begin{figure}[ht]
\[
\Qcircuit @C=1em @R=1em{
\lstick{\ket{s}}	&	{/}\qw	&	\gate{\text{QFT}^\dagger}	&	\ctrl{1}	&	\gate{\text{QFT}}	&	\rstick{C\ket{s}}\qw	\\
\lstick{\ket{0}}	&	{/}\qw	&	\gate{O_x}	&	\gate{R}	&	\gate{H^{\otimes L}}	&	\meter\qw^{\qquad\qquad\qquad\quad\ket{0^L}}	\\
}\]
\caption{Implementation of circulant matrices. Here $R\ket{k}\ket{j} = e^{2\pi ikj/N}\ket{k}\ket{j}$.}
\label{fig:cmatrix}
\end{figure}

Consider an arbitrary state $\ket{s}$. We wish to obtain $C\ket{s}$, where $C$ is an arbitrary circulant matrix.
Below, we present a possible algorithm for implementing a circulant matrix quantum operator (see \autoref{fig:cmatrix}).

\begin{enumerate}[label=\emph{Step~\arabic*}]
\item
Perform the inverse QFT on $\ket{s}$:
\begin{equation}
\sum_{k=0}^{N-1} s_k \ket{k} \rightarrow \sum_{k=0}^{N-1} \mathfrak{s}_k \ket{k}.
\end{equation}

\item
Add another register prepared to $\sum_{j=0}^{N-1} c_j\ket{j}$ using $O_x$ ($\gv{x}=\gv{c}$ in \autoref{eq:oracle}):
\begin{equation}
\sum_{k=0}^{N-1} \mathfrak{s}_k \ket{k} \rightarrow \sum_{j,k=0}^{N-1} \mathfrak{s}_k c_j\ket{k}\ket{j}.
\end{equation}

\item
Apply the controlled phase gate so that $\ket{k}\ket{j} \rightarrow e^{2\pi ikj/N}\ket{k}\ket{j}$:
\begin{equation}
\sum_{j,k=0}^{N-1} \mathfrak{s}_k c_j\ket{k}\ket{j} \rightarrow \sum_{j,k=0}^{N-1} \mathfrak{s}_k c_j e^{2\pi ijk/N}\ket{k}\ket{j}.
\end{equation}

\item
Apply Hadamard gates to $\ket{j}$:
\begin{equation}
\sum_{j,k=0}^{N-1} \mathfrak{s}_k c_j e^{2\pi ijk/N}\ket{k}\ket{j} \rightarrow \sum_{j,k=0}^{N-1} \mathfrak{s}_k \ket{k}(F_k \ket{0^L}+ \sqrt{1-F_k^2}\ket{0^\perp}),
\end{equation}
where $\ket{0^\perp}$ represents any states perpendicular to $\ket{0^L}$.

\item
\label{item:post-select}
By post-selecting the ancillary qubit state $\ket{0^L}$, the quantum state in the first register collapses to
\begin{equation}
\frac{1}{\sqrt{\sum_{k}|F_k \mathfrak{s}_k|^2}}
\sum_{k=0}^{N-1} F_k \mathfrak{s}_k \ket{k}.
\end{equation}

\item
Perform the QFT:
\begin{equation}
\text{QFT}\sum_{k=0}^{N-1} \mathfrak{s}_k F_k \ket{k} \propto C\ket{s}.
\end{equation}

\end{enumerate}

Note that the post-selection probability of obtaining the correct state in \ref{item:post-select} is
\begin{equation}
p = \sum_{k=0}^{N-1} \abs{\mathfrak{s}_k F_k}^2,
\end{equation}
and $p$ equals to $1/N$ when $C$ is unitary.
Therefore, using amplitude amplification \cite{brassard2002},
$\mathcal{O}((\log N)^2/\sqrt{p})$ one- or two-qubit gates, as well as
$\mathcal{O}(1/\sqrt{p})$ calls of $O_x$, $O_s$ and their inverses are needed to implement a circulant matrix operation $C$, where $O_s \ket{0^L} = \sum_{k=0}^{N-1}s_k\ket{k}$.

\bibliography{refs}

\end{document}